\newcommand\aug{\fboxsep=-\fboxrule\!\!\!\fbox{\strut}\!\!\!}
\begin{document}

\newtheorem{theorem}{Theorem}[section]
\newtheorem{definition}{Definition}[theorem]
\newtheorem{corollary}{Corollary}[theorem]
\newtheorem{lemma}[theorem]{Lemma}
\newtheorem{prop}{Proposition}
\newtheorem{remark}{Remark}
\newtheorem{note}{Note}
\newtheorem{conjecture}[theorem]{Conjecture}
\newcommand{\hlc}[2][yellow]{{%
    \colorlet{foo}{#1}%
    \sethlcolor{foo}\hl{#2}}%
}
\newcommand{\Fq}{\mathbb{F}_q}
\newcommand{\Z}{\mathbb{Z}}
\newcommand{\qrn}{{\mathbb{F}_q[x] \over \langle x^n-1\rangle}}
\newcommand{\qrm}{{\mathbb{F}_q[x] \over \langle x^m-a\rangle}}
\addto\captionsenglish{\renewcommand{\tablename}{}}

\addto\captionsenglish{\renewcommand{\tablename}{}}

\title{Good Classical and Quantum Codes from Multi-Twisted Codes}
\author{Nuh Aydin, Thomas Guidotti, and Peihan Liu } 
\date{May 2020}

\maketitle
\begin{abstract}
Multi-twisted (MT) codes were introduced as a generalization of quasi-twisted (QT) codes. QT codes have been known to contain many good codes.  In this work, we show that codes with good parameters and desirable properties can be obtained from MT codes. These include best known and optimal classical codes with additional properties such as reversibility and self-duality, and new and best known non-binary quantum codes obtained from special cases MT codes. Often times best known quantum codes in the literature are obtained indirectly by considering extension rings. Our constructions have the advantage that we obtain these codes by more direct and simpler methods. Additionally, we found theoretical results about binomials over finite fields that are useful in our search. 

\textbf{Keywords:} Multi-twisted codes, best known linear codes, quantum codes, reversible codes, self-dual codes.

\end{abstract}
\footnote{\textit{Email addresses:} \begin{scriptsize}\textbf{aydinn@kenyon.edu}\end{scriptsize} (Nuh Aydin), \begin{scriptsize}\textbf{guidotti1@kenyon.edu}\end{scriptsize} (Thomas Guidotti), \begin{scriptsize}\textbf{paulliu@umich.edu, paulpeihanliu@gmail.com}\end{scriptsize} (Peihan Liu)}

\section{Introduction and Motivation}
A liner code $C$ of length $n$ over $\Fq$, the finite field of order $q$, is a vector subspace of $\Fq^n$. The three important parameters of a linear code are: the length ($n$),  the dimension ($k$), and the minimum distance ($d$). If a code $C$ over $\Fq$ is of length $n$, dimension $k$, and minimum distance $d$, then we call it an $[n,k,d]_q$ code. One of the most important and challenging problems in coding theory is to determine the optimal parameters of a linear code and to explicitly construct codes that attain them. For example, given the alphabet size $q$, length $n$, and dimension $k$, researchers try to determine the highest value of the minimum distance $d$, and  give an explicit construction of such a code. Despite much work on this question, it still has many open cases.  There are databases that contain information about best known linear codes (BKLC) and their constructions. Two of the well known databases  are the maintained by M. Grassl ( \cite{otable}, codetables.de)  and the database contained in the Magma software \cite{magma}.

Given $q, n$ and $k$, there are theoretical upper bounds on $d$ that are given in the tables. However,   it is not guaranteed and not always possible that they can actually be attained.  This is a challenging problem for two main reasons: firstly, computing the minimum distance of a linear code is an NP-hard problem \cite{NPhard} and secondly the number $\displaystyle{\frac{(q^n-1)(q^n-q)\cdots (q^n-q^{k-1})}{(q^k-1)(q^k-q)\cdots (q^k-q^{k-1})}}$ of linear codes  over $\Fq$ of length $n$ and dimension $k$ grows fast when $n,k$ and $q$ increase. Therefore, it is impossible to do an exhaustive search except for very small parameters. Hence, researchers focus  work on sub-classes of linear codes  with special algebraic properties. One very promising class has been the class of quasi-twisted (QT) codes which contains the important classes such as cyclic codes, constacyclic codes, and quasi-cyclic (QC) codes as special cases. Hundreds of BKLCs have been obtained, usually by computer searches, from the class of QT codes. 

More recently, a generalization of QT codes called multi-twisted (MT) codes was introduced in \cite{mt}. Therefore, MT codes are a generalization of cyclic, negacyclic, consta-cyclic (CC), quasi-cyclic (QC), generalized quasi-cyclic (GQC) and quasi-twisted codes (QT). Compared to QT codes, the search space for MT codes is much larger. While this increases the chances of finding new codes, it also makes the search process computationally more expensive. In this work, we focus on a special type of MT codes whose generator matrix is of the form $G= [I_k \mid A]$ where $I_k$ is the $k \times k$ identity matrix and $A$ is a $k \times n-k$ circulant matrix. The  idea of investigating  these codes was first introduced in \cite{ICY}. In addition to trying to find new linear codes from this type of MT Codes, we also worked on obtaining codes that are reversible, self-dual, and self-orthogonal.

Finally, we also worked on obtaining quantum codes from classical codes, specifically from constacyclic, QC, QT, and MT codes. The idea of quantum codes was first introduced in \cite{Quantumoriginal1} and \cite{Quantumoriginal2}. Later methods of constructing quantum codes from classical codes were introduced in \cite{Quantumoriginal3} (known as CSS construction)  and \cite{steane} (Steane construction). Since then, researchers have been working on various ways of obtaining quantum codes from classical codes. More recently, codes over extension rings have attracted a lot of attention. Many quantum codes in the literature have been obtained via the method of considering codes over an extension ring $S$ of a base/ground ring (or field) $R$ and using a map to eventually obtain codes over the ground ring $R$. We show that in many cases, there is no need to obtain these codes in such indirect way. We have obtained many of these best known codes presented in the literature, and even better ones, by using a direct approach. It is preferable to obtain codes in more direct ways with good algebraic structures. 


\section{Preliminaries} 
\begin{definition} \cite{mt} For each $i=1,\dots,\ell$, let $m_i$ be a positive integer and $a_i\in \Fq ^{*}=\Fq\setminus \{0\}$. A multi-twisted (MT) module $V$ is an $\Fq[x]$-module of the form 
\[V = {\displaystyle\prod_{i=1}^\ell \Fq[x] / \langle x^{m_i} - a_i \rangle},\]  An MT code is an $\Fq[x]$-submodule of an MT module $V$. 
\end{definition}

Equivalently, we can define an MT code in terms of the shift of a codeword. Namely, a linear code $C$ is MT if for any codeword 
$$\vec{c}=(c_{1,0},\dots, c_{1,{m_1-1}}; c_{2,0},\dots, c_{2,{m_2-1}};\dots;c_{\ell,0},\dots, c_{\ell,{m_{\ell}-1}})\in C,$$
\noindent its multi-twisted shift
$$(a_1c_{1,m_1-1},c_{1,0},\dots, c_{1,{m_1-2}}; a_2c_{2,m_2-1},c_{2,0},\dots, c_{2,{m_2-2}};\dots; a_{\ell}c_{\ell,m_{\ell}-1},\dots, c_{\ell,{m_{\ell}-2}})$$
\noindent is also a codeword. If we identify a vector $\vec{c}$ with $C(x)=(c_1(x),c_2(x),\dots,c_{\ell}(x))$ where $c_i(x)=c_{i,0}+c_{i,1}x+\cdots+c_{i,m_i-1}x^{m_i-1}$, then the MT shift corresponds to $xC(x)=(xc_1(x) \mod x^{m_1}-a_1, \dots, xc_{\ell}(x) \mod x^{m_{\ell}}-a_{\ell})$.

The following are some of the most important special cases of MT codes:
\begin{itemize}
\item $a_1=a_2=\dots=a_{\ell}$ gives QT codes
\item $a_1=a_2=\dots=a_{\ell}=1$ gives QC codes
\item $\ell=1, a_1=1$  gives cyclic codes
\item $\ell=1, a_1=-1 $  gives negacyclic codes
\item $\ell=1$  gives constacyclic codes
\end{itemize}

There has been much research on  QT codes, a special case of MT codes. Hundreds of new codes have been obtained from QT codes by computer searches. An algorithm called ASR has been particularly effective in this regard.The ASR algorithm is based on the following theorem.

\begin{theorem}\label{theorem:ASR}
\cite{qtmain} Let C be a 1-generator QT code of length $n = m\ell$ over $\Fq$
with a generator of the form
$$(g(x)f_1(x), g(x)f_2(x), . . . , g(x)f_{\ell}(x))$$ where $x^m - a = g(x)h(x)$ and $\gcd(h(x),f_i(x)) = 1$ for all $i = 1,...,\ell$. Then $dim(C) = m -deg(g(x))$, and $d(C) \geq \ell\cdot d$ where $d$ is the minimum distance of the constacyclic code $C_g$ generated by $g(x)$.
\end{theorem}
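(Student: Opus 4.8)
The plan is to realize $C$ as the image of a single cyclic module and read off both conclusions from the structure of that image. Write $R = \Fq[x]/\langle x^m - a\rangle$, so that $\dim_{\Fq} R = m$, and consider the $\Fq[x]$-module homomorphism
\[
\phi : R \longrightarrow R^{\ell}, \qquad p(x) \longmapsto \bigl(p(x)g(x)f_1(x), \dots, p(x)g(x)f_{\ell}(x)\bigr),
\]
where every component is reduced modulo $x^m - a$. By definition of a 1-generator QT code, $C = \operatorname{im}\phi$, so by rank--nullity $\dim C = m - \dim_{\Fq}(\ker\phi)$.

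First I would compute $\ker\phi$. An element $p \in R$ lies in the kernel exactly when $g(x)h(x) \mid p(x)g(x)f_i(x)$ for every $i$, i.e. when $h(x) \mid p(x)f_i(x)$ for every $i$. Here the hypothesis $\gcd(h(x), f_i(x)) = 1$ enters: it upgrades each of these divisibilities to the single condition $h(x)\mid p(x)$. Hence $\ker\phi = \langle h(x)\rangle$ as an ideal of $R$, and since $R/\langle h\rangle \cong \Fq[x]/\langle h\rangle$ has dimension $\deg h$, we get $\dim_{\Fq}\ker\phi = m - \deg h = \deg g$. This yields $\dim C = m - \deg g$, the first claim.

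For the distance bound, I would examine an arbitrary nonzero codeword $\phi(p) = (c_1,\dots,c_\ell)$ with $c_i = p g f_i \bmod (x^m - a)$. Each $c_i$ is a multiple of $g$ in $R$, hence lies in the constacyclic code $C_g = \langle g\rangle$; so any nonzero $c_i$ has Hamming weight at least $d = d(C_g)$. The crux is the observation already extracted from the kernel computation: $c_i = 0$ is equivalent to $h \mid p$, a condition that does \emph{not} depend on $i$. Consequently, for a nonzero codeword no component can vanish --- all $\ell$ components are simultaneously nonzero. Summing the block weights then gives $\operatorname{wt}(\phi(p)) = \sum_{i=1}^{\ell}\operatorname{wt}(c_i) \ge \ell\, d$, and therefore $d(C)\ge \ell\, d$.

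The routine parts are the rank--nullity bookkeeping and the identification of $C_g$ with the ideal $\langle g\rangle$. The step I expect to carry the real weight is the ``all-or-nothing'' dichotomy for the blocks: the genuine content of the coprimality hypothesis is precisely that the vanishing of one block forces the vanishing of all of them, which is what turns the trivial per-block bound into the factor-$\ell$ gain. I would take care to isolate this dichotomy as a standalone observation, since it is used identically in both the kernel dimension count and the weight estimate.
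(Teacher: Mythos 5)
Your proof is correct, and in fact the paper offers no proof of this theorem at all --- it is quoted from \cite{qtmain}, and your argument is essentially the standard one given in that original source: the annihilator computation $\ker\phi=\langle h\rangle$ yields $\dim(C)=m-\deg g$, and the coprimality hypothesis forces the all-or-nothing dichotomy on blocks (one block vanishes iff $h\mid p$ iff all blocks vanish), which gives $d(C)\geq \ell\cdot d$. Nothing is missing; your isolation of the dichotomy as the single point where $\gcd(h,f_i)=1$ is used, in both the kernel count and the weight estimate, is exactly the right way to organize it.
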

This algorithm has been refined and automatized  in more recent works such as (\cite{gf2}, \cite{gf7}, \cite{gf3}, \cite{nonprime}, \cite{oliverequiv}) and dozens of record breaking codes have been obtained over every finite field $\Fq$, for $q=2,3,4,5,7,8,9$ through its implementation. Moreover, it has been further generalized in \cite{ICY} and more new codes were discovered  that would have been missed by its earlier versions.

We can consider MT analogues of the QT codes given in the previous theorem. The following theorem gives a fact about duals of such codes.

\begin{theorem}
Let an  $C=\langle g_1(x),g_2(x),g_3(x),...,g_{\ell}(x) \rangle$ be an MT code such that  the constacyclic codes generated by $g_i(x)$ have the same length $n$ and the same dimension $k$ for $i=1,2,...,\ell$. Let $D_i$ be the dual of $\langle g_i(x) \rangle$. Let $PD=D_1\times D_2\times \dots \times D_{\ell}$. Then $PD$ is contained in $C^{\perp}$. 
\end{theorem}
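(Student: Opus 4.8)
The plan is to reduce the claimed containment to the block-wise orthogonality of each constacyclic code with its own dual, exploiting the fact that the Euclidean inner product on the ambient space splits as a sum over the $\ell$ coordinate blocks. Since the hypothesis forces every constacyclic code $\langle g_i(x)\rangle$ to have length $n$, all blocks share the same length, and both $C$ and $PD$ sit inside the same ambient space $\Fq^{n\ell}$; this is the first thing I would record, after fixing the convention that each $D_i$ denotes the Euclidean dual $\langle g_i(x)\rangle^{\perp}$ computed inside the $i$-th block $\Fq^n$.

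Next I would describe a typical codeword of $C$. Writing $C$ as the $\Fq[x]$-submodule of $V = \prod_{i=1}^{\ell} \Fq[x]/\langle x^n - a_i\rangle$ generated by the tuple $(g_1(x),\dots,g_\ell(x))$, every element of $C$ has the form $p(x)\cdot(g_1(x),\dots,g_\ell(x))$ for some $p(x)\in\Fq[x]$, so its $i$-th block equals $p(x)g_i(x) \bmod (x^n - a_i)$. Because this is a multiple of $g_i(x)$ in the quotient ring, it lies in the constacyclic code $C_i := \langle g_i(x)\rangle$. Hence the projection of $C$ onto the $i$-th block is contained in $C_i$, for every $i$.

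Then I would take an arbitrary $\mathbf{d} = (d_1,\dots,d_\ell)\in PD$, so that $d_i \in D_i = C_i^{\perp}$ by the definition of the product $PD$, together with an arbitrary $\mathbf{c} = (c_1,\dots,c_\ell)\in C$ having each $c_i \in C_i$ by the previous step. Decomposing the Euclidean inner product block by block gives $\langle \mathbf{d}, \mathbf{c}\rangle = \sum_{i=1}^{\ell} \langle d_i, c_i\rangle$. Since $d_i \in C_i^{\perp}$ and $c_i \in C_i$, each summand vanishes, so $\langle \mathbf{d}, \mathbf{c}\rangle = 0$. As $\mathbf{c}$ ranges over all of $C$ this shows $\mathbf{d}\in C^{\perp}$, and as $\mathbf{d}$ ranges over $PD$ we conclude $PD \subseteq C^{\perp}$.

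The argument is short, and the only point requiring genuine care — the one I would flag as the main obstacle — is the bookkeeping that each $D_i$ is a dual taken \emph{within} its own coordinate block (equivalently, $C_i^{\perp}$ is an $a_i^{-1}$-constacyclic code living in $\Fq^n$), so that the block-wise pairing $\langle d_i, c_i\rangle$ is exactly the Euclidean pairing used to define the constacyclic dual. Once this interpretation is pinned down, the product structure of $PD$ together with the fact that the projections of $C$ lie blockwise in the $C_i$ makes all cross terms of the global inner product disappear automatically. I would also remark that the equal-dimension hypothesis is not actually needed for the containment itself; it becomes relevant only if one wishes to compare dimensions and ask when $PD = C^{\perp}$.
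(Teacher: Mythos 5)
Your proof is correct, and its core computation is the same as the paper's: the Euclidean inner product on $\Fq^{n\ell}$ splits as a sum over the $\ell$ blocks, the $i$-th block of any codeword of $C$ lies in $C_i=\langle g_i(x)\rangle$, and each block pairs to zero against $D_i=C_i^{\perp}$. The paper packages this as a matrix identity: it takes a $k$-row generator matrix $G=(G_1\mid G_2\mid\cdots\mid G_\ell)$ for $C$ and the block-diagonal matrix $GPD$ built from generator matrices $H_i$ of the $D_i$, asserts that every row of $G$ is orthogonal to every row of $GPD$ (``readily verified''), and appends a comparison of dimensions. Your element-level formulation is in one respect more robust: by pairing $\mathbf{d}\in PD$ against an arbitrary multiple $p(x)\cdot(g_1(x),\dots,g_\ell(x))$ with $p(x)$ unrestricted, you cover all of $C$, whereas the paper's $k$-row matrix spans only the multiples with $\deg p < k$, and the $1$-generator MT code can have dimension strictly larger than $k$ when the check polynomials $h_i=(x^n-a_i)/g_i(x)$ are not all equal (the annihilator of the generating tuple is $\langle \mathrm{lcm}(h_1,\dots,h_\ell)\rangle$, whose degree can exceed $k$), so the paper's claim that $\dim C^{\perp}=\sum_i n_i - k$ implicitly assumes $\dim C = k$. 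Your closing remarks are likewise on target: the dual must indeed be taken blockwise (each $D_i$ is an $a_i^{-1}$-constacyclic code in its own copy of $\Fq^n$), the paper's dimension count is unnecessary for the containment since orthogonality alone yields $PD\subseteq C^{\perp}$, and the equal-dimension hypothesis can be dropped from the statement without affecting the conclusion.
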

\begin{proof}
Let $G_i$ be a generator matrix of $C_i=\langle g_i(x)\rangle$, so a generator matrix $G$ of $C$ is given by
\begin{align*}
    G=
    \begin{pmatrix}
    G_1&G_2&G_3&...&G_l
    \end{pmatrix}_{k\times \sum_{i=1}^{l} n_i}.
\end{align*}
Let $H_i$ be a generator matrix of $D_i=C_i^{\perp}$. Then a generator matrix $GPD$ of $PD$ is given by
\begin{align*}
    GPD=
    \begin{pmatrix}
    H_1&0&0&...&0\\
    0&H_2&0&...&0\\
    0&0&H_3&...&0\\
    \ddots&\ddots&\ddots&\ddots&\ddots\\
    0&0&0&...&H_l
    \end{pmatrix}_{(n-k)l \times \sum_{i=1}^{\ell} n_i}.
\end{align*}
It is readily verified  that $G\cdot GPD=0$. Note that the dimension of $C^{\perp}$ is $\sum_{i=1}^{\ell} n_i-k$ and the dimension of $PD$ is $\sum_{i=1}^{\ell}(n_i-k)$, so the dimension of $PD$ is $\leq$ the dimension of $C^{\perp}$. Hence, $PD\subseteq C^{\perp}$.
\end{proof}

\section{A Special Type of MT Codes}
We first introduced a special type of MT codes that we called  ICY codes in  \cite{ICY}. The basic idea for this construction is inspired by the well known fact that any linear code is equivalent to a linear code whose generator matrix is of the form $G= [I_k \mid A]$ where $I_k$ is the $k \times k$ identity matrix and $A$ is a $k \times n-k$ matrix. A code with a generator matrix in this form is called a systematic code. In this search method, we begin with a $k \times k$ identity matrix on the left  and append it with $k \times n-k$ circulant matrices on the right. The circulant matrix is constructed using a generator polynomial $g(x) | x^{n-k} - a$ of a constacyclic code (obtained using the partition program from \cite{oliverequiv}). For each polynomial $g(x)$, we consider a number of circulant matrices obtained from polynomials $g(x) \cdot f(x)$ where $\gcd(f(x), h(x)) = 1$, with the goal of finding a code with minimum distance equal to or greater than the BKLCs for parameters $[n, k]_q$.  In \cite{ICY} we were able to find many good codes over $GF(2)$ and $GF(5)$. In this paper we present a number of optimal codes over $GF(2)$ with parameters  $[2\cdot k, k]$  obtained from an implementation of this method. A number of these codes we found are also self-dual, which  makes these codes even more useful. 
\begin{lemma} \label{theorem:ICY 2kk GF(2)}
If f(x) is a monomial and the block length of $\langle f(x)\rangle$ is the same as the block length of the identituy matrix, then the MT code $C$ over GF(2) generated by $\langle 1,f(x)\rangle$ is always self-dual, reversible and of minimum distance 2. In particular, if the monomial is a constant, then $C$ is a cyclic code.
\end{lemma}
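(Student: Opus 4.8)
The plan is to make the matrix structure of $C$ completely explicit and then read off all four assertions from it. First I would observe that over $GF(2)$ the only nonzero scalar is $1$, so a monomial $f(x)$ must be $f(x)=x^{j}$ for some $j\ge 0$, and each block is in fact cyclic (the constacyclic shift with $a=1$). Writing $k$ for the common block length, I would note that multiplication by $x^{j}$ in $\mathbb{F}_2[x]/\langle x^{k}-1\rangle$ is a cyclic shift by $j$, so the generator matrix of $C$ is $G=[\,I_{k}\mid P\,]$, where $P$ is the permutation (circulant) matrix implementing that shift. In particular $C$ has length $2k$ and dimension $k$, and a generic codeword is $(p(x),\,x^{j}p(x))$ as $p$ ranges over the first block.

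For self-duality, since $\dim C=k=2k-k=\dim C^{\perp}$, it suffices to prove that $C$ is self-orthogonal, i.e.\ $GG^{\mathsf{T}}=0$. The computation is immediate: $GG^{\mathsf{T}}=I_{k}I_{k}^{\mathsf{T}}+PP^{\mathsf{T}}=I_{k}+I_{k}=0$ over $GF(2)$, where the crucial input is $PP^{\mathsf{T}}=I_{k}$ for a permutation matrix $P$. This step is also where the hypothesis that the two block lengths agree is used, since it is exactly what makes $I_k$ and $P$ both $k\times k$, so that $[\,I_{k}\mid P\,]$ is $k\times 2k$, the product $GG^{\mathsf{T}}$ is defined, and the dimension count $\dim C^{\perp}=k$ holds.

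For the minimum distance I would use that a cyclic shift preserves Hamming weight, so every codeword $(p,\,x^{j}p)$ has weight $\mathrm{wt}(p)+\mathrm{wt}(x^{j}p)=2\,\mathrm{wt}(p)$. Every nonzero weight is therefore even and at least $2$, while taking $p$ a single monomial realizes weight exactly $2$; hence $d(C)=2$. For reversibility, the cleanest route is to show that the reversal of each row of $G$ is again a codeword, which suffices because reversal preserves cardinality. Reversing the length-$2k$ indicator vector of row $i$ (support $\{i,\;k+((i+j)\bmod k)\}$) swaps the two blocks and sends each in-block position $t$ to $k-1-t$; I would check by index arithmetic modulo $k$ that the result is exactly the codeword $(x^{\,k-1-i-j},\,x^{\,k-1-i})$ (exponents read modulo $k$), so $C^{r}\subseteq C$ and equality follows by dimension count.

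Finally, for the parenthetical cyclic claim, specializing to $f$ constant means $j=0$ and $G=[\,I_{k}\mid I_{k}\,]$, so codewords are $(p,p)$. A single cyclic shift of the whole length-$2k$ vector cyclically shifts each copy of $p$ identically, returning a vector of the form $(\tilde p,\tilde p)\in C$; invariance under the shift is precisely cyclicity. I expect the only genuine obstacle to be bookkeeping: getting the reversal convention for the concatenated two-block vector and the index reductions modulo $k$ correct in the reversibility step, since everything else reduces to the single identity $PP^{\mathsf{T}}=I_{k}$ and the weight-preservation of cyclic shifts.
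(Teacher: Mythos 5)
Your proof is correct and takes essentially the same route as the paper's: both write the generator matrix as $G=[\,I_k \mid P\,]$ with $P$ the shift permutation (circulant) matrix and read off self-duality, reversibility, the minimum distance, and the cyclic specialization directly from that matrix, with your row-reversal index computation being the same statement as the paper's ``reverse of the $j$th row is another row.'' If anything, two of your steps are tighter than the paper's: $GG^{\mathsf{T}}=I_k+PP^{\mathsf{T}}=0$ packages the row-by-row orthogonality check, and noting that every codeword has the form $(p(x),\,x^{j}p(x))$ of weight $2\,\mathrm{wt}(p)$ supplies the lower bound $d\geq 2$ that the paper leaves implicit when it infers $d=2$ from the weight-$2$ rows alone.
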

\begin{proof}
Let $f(x)=x^i$ be a monomial and the block length of $\langle f(x)\rangle$ be the same as $k\times k$ identity matrix. Then the generator matrix of the MT code $C$ over $GF(2)$ generated by $\langle 1,f(x)\rangle$ is given by:
\begin{align*}
G=
\begin{pmatrix}
1&0&...&0&\aug&0&...&1&0&...\\
0&1&...&0&\aug&0&...&0&1&...\\
\ddots&\ddots&\ddots&\ddots&\aug&\ddots&\ddots&\ddots&\ddots&\ddots\\
0&0&...&1&\aug&...&1&0&0&...\\
\end{pmatrix}_{k \times 2k}.
\end{align*}

It is easy to verify that the reverse of the $j^th$ row is the $(k-i+1-j)^th$ row, so it follows that $C$ is reversible. Also note that each row of $G$ is orthogonal to itself and also orthogonal to every other row, so given $k=2k/2$ we know $C$ is also self-dual. Moreover, since each row is of weight 2, and both matrices are circulant, it follows that $C$ is of minimum distance 2.

Specifically, if $f(x)=1$, then the generator matrix $G$ is given by
\begin{align*}
G=
\begin{pmatrix}
1&0&...&0&\aug&1&...&0&...\\
0&1&...&0&\aug&0&1&0&...\\
\ddots&\ddots&\ddots&\ddots&\aug&\ddots&\ddots&\ddots&\ddots\\
0&0&...&1&\aug&...&0&...&1\\
\end{pmatrix}_{k \times 2k}.
\end{align*}
Hence, the code is a cyclic code of length $2k$ generated by $f(x)=x^k-1|x^{2k}-1$.
\end{proof}

The following codes in Table 1 are obtained by ICY method where we chose $n=2k$, so each one of these codes has parameters $[2k,k,d]$. All of these codes are optimal with additional properties of being reversible and/or self orthogonal. For the circulant matrix on the right, we used a random $m \times m$ circulant matrix  by choosing a random vector and taking its cyclic shifts. This polynomial is given in the table. We also checked condition for reversibility and self-duality. Note that self-orthogonality is the same as self-duality since we set $n=2k$. In this search, the binary field fields turned out to be  most promising.

In this and every subsequent table presented in this paper the terms of the generator polynomials are listed in ascending order from left to right. For instance, the first generator polynomial $[111]$ in this table is the vector representation of $g(x) = 1 + x + x^2$. 

\begin{small}
\begin{singlespacing}
\label{tab:Table1}  
\begin{longtable}{ p{2cm}p{4.6cm} }
\caption*{Table 1: Optimal codes from ICY method with several properties. Here, $^*$ delineates a reversible code, and $^\circ$ delineates a self-dual code.}

\\
\hline\noalign{\smallskip}
$[n,k,d]_{q}$   & Polynomials  \\
\noalign{\smallskip}\hline\noalign{\smallskip}
    $[8, 4, 4]_2$ $^{* \circ}$ & $g=[111]$\\
    $[10, 5, 4]_2$&$g=[10101]$\\
    $[12,6,4]_{2}$ $^{* \circ}$   &$g=[110111]$\\
    $[14,7,4]_{2}$  &$g= [111101]$ \\
    $[20,10,6]_{2}$   &$g= [11111011]$ \\
    $[24, 12, 8]_2$ $^{* \circ}$ & $g=[101111011]$\\
    $[28,14,8]_{2}$   &$g= [11101111001001]$ \\
    $[32, 16, 8]_2$ $^{* \circ}$ & $g=[1100100001100011]$\\
    $[36, 18, 8]_2$ $^{* \circ}$ & $g=[111010010111001]$\\
    $[42,21,10]_{2}$   &$g= [11110011011011101001]$ \\
    $[48, 24, 12]_{2}$ $^{* \circ}$   &$g= [10111101001011110011101]$ \\
    $[16,8,6]_{3}$ $^*$   &$g= [212222]$ \\
    $[18,9,6]_{3}$ $^*$  &$g= [201002222]$ \\
    $[20,10,7]_{3}$ $^*$   &$g= [112012102]$ \\
    $[30,15,9]_{3}$ $^*$   &$g= [201212222200121]$ \\
    $[14,7,6]_{4}$ $^{* \circ}$   &$g= [\alpha\alpha\alpha^2\alpha\alpha^2\alpha^2]$ \\
    $[20,10,8]_{4}$ $^{* \circ}$   &$g=[\alpha\alpha1\alpha00\alpha^2 1\alpha^2\alpha^2]$ \\
    $[24,12,9]_{4}$  &$g=[10\alpha\alpha^2\alpha\alpha^2 \alpha10\alpha1]$ \\
    $[14,7,6]_{5}$   &$g= [3343442]$ \\
    $[16,8,7]_{5}$   &$g= [34404411]$ \\
    $[12,6,6]_{7}$   &$g= [551641]$ \\
    $[14,7,7]_{7}$   &$g= [5226211]$ \\
    $[16,8,7]_{7}$   &$g= [42325344]$ \\
    $[18,9,8]_{7}$   &$g= [524635101]$ \\
  [.5ex]
\noalign{\smallskip}\hline
\end{longtable}
\end{singlespacing}
\end{small}

The next table presents a list of optimal self-dual codes together with their weight enumerators obtained by this method. For codes over $GF(4)=\{0,1,\alpha,\alpha+1 \}$, $\alpha$ is a root of $x^2+x+1$ over the binary field. Again, $^*$ delineates a reversible code. 
\begin{singlespacing}
\begin{footnotesize}
\label{tab:2}       
\begin{longtable}{ p{1.8cm}p{1.3cm}p{4.5cm}p{4.48cm} }
\caption*{Table 2: Optimal self-dual code from ICY method. }

\\
\hline\noalign{\smallskip}
$[n,k,d]_{q}$   &Type& weight enumerator & Polynomials  \\
\noalign{\smallskip}\hline\noalign{\smallskip}
    $[8, 4, 4]_2$ $^{*}$ &II&\scriptsize $A^8 + 14A^4 + 1$ & $g=[111]$\\
     $[12,6,4]_{2}$  & $I$&\scriptsize $A^{12} + 15A^8 + 32A^6 + 15A^4 + 1$ &$g=[110111]$\\
  [.5ex] 
   $[16,8,4]_{2}$ $^*$  &$II$&\scriptsize $A^{16} + 28A^{12} + 198A^8 + 28A^4 + 1$ &$g=[110111]$\\
  [.5ex]
   $[16,8,4]_{2}$ $^*$ &$I$&\scriptsize $A^{16} + 12A^{12} + 64A^{10} + 102A^8 + 64A^6 + 12A^4 + 1$ &$g=[1010111]$\\
  [.5ex]
  $[18, 9, 4]_2$ $^*$  &$I$& $\scriptsize A^{18} + 9A^{14} + 75A^{12} + 171A^{10} + 171A^8 + 75A^6 + 9A^4 + 1$ & $g=[11011001]$\\
   $[20,10,4]_{2}$ $^*$  &$I$&\scriptsize $A^{20} + 45A^{16} + 210A^{12} + 512A^{10} + 210A^8 + 45A^4 + 1$ &$g=[1011111111]$\\
  [.5ex]
  $[22,11,6]_{2}$ $^*$  &$I$&\scriptsize $ A^{22} + 77A^{16} + 330A^{14} + 616A^{12} + 616A^{10} + 330A^8 + 77A^6+ 1$ &$g=[1001011]$\\
  [.5ex]
   $[26,13,6]_{2}$   &$I$& \scriptsize$A^{26} + 52A^{20} + 390A^{18} + 1313A^{16} + 2340A^{14} + 2340A^{12} + 1313A^{10} + 390A^8 + 52A^6 + 1$ &$g=[10111110111]$\\
  [.5ex]
     $[52,26,10]_{2}$  &$I$& \scriptsize$A^{52} + 442A^{42} + 6188A^{40} + 53040A^{38} + 308958A^{36} + 1270360A^{34} + 3754569A^{32} + 8065616A^{30} + 12707500A^{28} + 14775516A^{26} +12707500A^{24} + 8065616A^{22} + 3754569A^{20} + 1270360A^{18} + 308958A^{16} + 53040A^{14} + 6188A^{12} + 442A^{10} + 1$ &$g=[1001111100101010100100101]$\\
  $[14,7,6]_{4}$ $^*$ & $Euclidean$  &\scriptsize $318A^{14} + 1302A^{13} + 2940A^{12} + 3990A^{11} + 3738A^{10} + 2226A^{9}+ 1155A^8 + 546A^7 + 168A^6 + 1$ &$g=[\alpha1\alpha\alpha^2\alpha\alpha\alpha^2]$\\
  [.5ex]
  $[16,8,6]_{4}$ $^*$ &$Euclidean$ &\scriptsize $681A^{16} + 3696A^{15} + 7896A^{14} + 14112A^{13} + 16464A^{12} + 9408A^{11} + 7392A^{10} + 4704A^9 + 678A^8 + 336A^7 + 168A^6 + 1$ &$g=[\alpha\alpha1\alpha^2\alpha\alpha^2]$\\
  [.5ex]
  $[20,10,8]_{4}$ $^*$ &$Euclidean$ &\scriptsize $171A^{12} + 432A^{11} + 864A^{10} + 1440A^9 + 459A^8 + 432A^7 + 288A^6 + 9A^4 + 1$ &$g=[\alpha\alpha1\alpha00\alpha^2\alpha^2\alpha^2]$\\
  $[22,11,8]_{4}$ $^*$ &$Euclidean$ &\scriptsize $171A^{12} + 432A^{11} + 864A^{10} + 1440A^9 + 459A^8 + 432A^7 + 288A^6 + 9A^4 + 1$ &$g=[\alpha^2\alpha\alpha\alpha^2010\alpha \alpha^2\alpha^2\alpha]$\\
  [.5ex] 
\noalign{\smallskip}\hline
\end{longtable}
\end{footnotesize}
\end{singlespacing}


\section{Reversible Codes}
Reversible codes are essential in application of coding theory to DNA computing \cite{dna2}. Each single DNA strand is composed of a sequence of four bases nucleotides (adenine (A), guanine (G), thymine (T), cytosine (C) ) and it is paired up with a complementary strand to form a double helix\cite{dna}. Finding reversible codes is an  essential requirement in order to find codes suitable for DNA computing. Also, reversible codes may be used in certain data storage applications \cite{mas}. For example, one just needs to read the stored data from either end of a block code. In this paper, we present a general method to construct reversible $\ell$-block MT, QT and QC codes, and a number of good reversible QC codes over $GF(2)$ are obtained using this method. 

The codes presented in Table 3 are reversible and self orthogonal QC with the same parameters as BKLCs over $GF(2)$. Hence they are more desirable than BKLCs that do not have these properties. In checking the reversibility of these codes, we  used the conditions for reversibility of QC codes given in \cite{qcReversible}. 
Theorem 1 in section IV of  \cite{qcReversible} presents three conditions for reversibility of QC codes with generator matrices of the form $[g(x), g(x) \cdot f(x)]$, where $g(x) | x^m-1$ and $\gcd(f(x), h(x))) = 1$. After finding all nonequivalent generators $g(x)$ for a given block length $m$, we check the first two conditions on $g(x)$ and $h(x)$ before proceeding. If these two checks pass, then we  generate random polynomials $f(x)$ that satisfy the third condition in the theorem, resulting in a reversible QC code. Additionally, since these QC codes are constructed over $GF(2)$ they have the added property of being self-orthogonal. Theorem 1 also provides conditions for self-orthogonality of QC codes with such generator matrices and in the binary case these conditions are the same as those for reversibility. For a polynomial $f(x)$, it is customary to denote its reciprocal by $f(x)^{*}$.
\begin{theorem}
Let $g_i|x^{n_i}-a_i$ and $\gcd(h_i,f_i)=1$ for $i=1,2,...,\ell$, where $h_i=\frac{x^{n_i}-a_i}{g_i}$. If $\deg(g_if_i)=\deg(g_{\ell+1-i}f_{\ell+1-i})$, $n_i=n_{\ell+1-i}$ and $g_if_i=(g_{\ell+1-i}f_{\ell+1-i})^*$ for $i=1,2,..,\ell$, then the MT code $C$ generated by $(g_1f_1,\dots,g_{\ell}f_{\ell})$ is reversible.
\end{theorem}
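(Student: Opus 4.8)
The plan is to show that the full reversal map $\rho$, which sends a codeword to the vector read backwards, maps $C$ into itself; since $\rho$ is an $\Fq$-linear involution and $C$ is finite, $\rho(C)\subseteq C$ is equivalent to reversibility. The starting observation is that reversing the entire length-$\sum_i n_i$ word does two things at once: it reverses the coordinates inside each block and it reverses the order of the blocks. Thus $\rho = B\circ R$, where $R$ reverses the entries within each block and $B$ sends block $i$ to block $\ell+1-i$; the hypothesis $n_i=n_{\ell+1-i}$ is exactly what makes $B$ a well-defined permutation of coordinates. On a single length-$n$ block, reversal is the reciprocation map $v(x)\mapsto x^{n-1}v(x^{-1})$, so everything about $\rho$ can be translated into statements about reciprocal polynomials.

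First I would reduce the problem to a single check on the generator. Writing $S$ for the multi-twisted shift (multiplication by $x$), the code is $C=\Fq[x]\,G=\mathrm{span}_{\Fq}\{S^jG:j\ge 0\}$ with $G=(g_1f_1,\dots,g_\ell f_\ell)$. The key intertwining relation is $\rho\circ S=S^{-1}\circ\rho$: reversing and then shifting is the same as shifting backwards and then reversing. Granting this, $\rho(S^jG)=S^{-j}\rho(G)$, so once I show $\rho(G)\in C$, the $S$-invariance of $C$ (hence $S^{-1}$-invariance, as $S$ is a unit of finite order) gives $\rho(S^jG)\in C$ for every $j$, and linearity finishes $\rho(C)\subseteq C$.

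The heart of the argument is computing $\rho(G)$ block by block. The $i$-th block of $\rho(G)$ comes from reversing the $(\ell+1-i)$-th block of $G$, namely $g_{\ell+1-i}f_{\ell+1-i}$, viewed as a length-$n_{\ell+1-i}$ vector. Reciprocation turns this into $x^{\,n_i-1-\deg(g_{\ell+1-i}f_{\ell+1-i})}\,(g_{\ell+1-i}f_{\ell+1-i})^{*}$, and now the three hypotheses enter: $n_i=n_{\ell+1-i}$ and $\deg(g_if_i)=\deg(g_{\ell+1-i}f_{\ell+1-i})$ make the exponent equal to $t_i:=n_i-1-\deg(g_if_i)$, while $g_if_i=(g_{\ell+1-i}f_{\ell+1-i})^{*}$ identifies the reciprocal with $g_if_i$ itself. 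Hence the $i$-th block of $\rho(G)$ is $x^{t_i}g_if_i$, a multiple of $g_i$ reduced modulo $x^{n_i}-a_i$. When the shifts $t_i$ coincide across all blocks --- which is automatic in the two-block case $\ell=2$ underlying the tables above, since the single mirror pair forces $t_1=t_2$ --- we obtain $\rho(G)=x^{t}G=S^{t}G\in C$ directly. In general the $t_i$ agree within each mirror pair ($t_i=t_{\ell+1-i}$), and assembling the blocks into one module element $p(x)G$ amounts to choosing $p(x)$ with $p\equiv x^{t_i}\pmod{h_i}$ for all $i$, which is where the coprimality $\gcd(h_i,f_i)=1$ is used to pass from ``$\rho(G)$ lies in each constacyclic component'' to ``$\rho(G)\in C$.''

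The step I expect to be the main obstacle is establishing the intertwining $\rho\circ S=S^{-1}\circ\rho$ in the genuinely multi-twisted setting. Because $\rho$ swaps the $a_i$-block with the $a_{\ell+1-i}$-block while $S$ multiplies each block by $x$ modulo its own $x^{n_i}-a_i$, a direct coordinate computation shows the two compositions agree on the $i$-th block precisely when $a_i a_{\ell+1-i}=1$, that is, when the moduli of paired blocks are reciprocal. I would therefore first record this compatibility of the constants --- the natural reciprocal pairing of the defining binomials, under which $\rho$ even maps the ambient module $\prod_i\Fq[x]/\langle x^{n_i}-a_i\rangle$ to itself --- and verify the intertwining on one block by the explicit shift-and-reverse calculation; the remaining assembly of $\rho(G)$ into $C$ is then routine given $\gcd(h_i,f_i)=1$.
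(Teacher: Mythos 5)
Your route is genuinely different from the paper's, and comparing the two is instructive. The paper never leaves the generator matrix: it writes $G=(G_1\;G_2\;\cdots\;G_\ell)$, where the rows of $G_i$ are the \emph{unreduced} shifts $x^jg_if_i$ for $0\le j\le n_i-k_i-1$ (no wrap-around occurs, since $j+\deg(g_if_i)\le n_i-1$), and then checks from $g_if_i=(g_{\ell+1-i}f_{\ell+1-i})^*$ that the reversal of the $j$-th row of $G$ is the $(n-k+1-j)$-th row; since reversal is linear, the row space is reversible. Because no row wraps modulo $x^{n_i}-a_i$, the constants $a_i$ never enter that computation at all. Your argument instead works with the whole module $C=\{p(x)G\}$ and propagates $\rho(G)\in C$ to all shifts via the intertwining $\rho\circ S=S^{-1}\circ\rho$ --- and exactly there you are forced to assume $a_ia_{\ell+1-i}=1$, which is \emph{not} among the theorem's hypotheses; "recording this compatibility first" amounts to adding a hypothesis, not proving the stated theorem. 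A second gap is the final assembly: the existence of $p(x)$ with $p\equiv x^{t_i}\pmod{h_i}$ for all $i$ is asserted, not proved; the exponents $t_i=n_i-1-\deg(g_if_i)$ agree only within mirror pairs, and when the $h_i$ share irreducible factors these simultaneous congruences need not be solvable. So, as written, your proposal does not close.

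That said, the obstruction you flagged as "the main obstacle" is real and not an artifact of your method. If $C$ denotes the full $\Fq[x]$-module generated by $(g_1f_1,\dots,g_\ell f_\ell)$, the statement actually fails without a condition on the constants: over $GF(5)$ take $\ell=1$ and $x^6-2=(x^2+2)(x^4+3x^2+4)$, with $g=x^2+2$ and $f=x^2+3$, so that $\gcd(f,h)=1$ and $gf=x^4+1$ is self-reciprocal; then $C=\langle g\rangle$ has dimension $4$ and contains $x^2(x^4+1)\equiv x^2+2$, whose coordinate reversal $x^3+2x^5$ leaves remainder $x$ upon division by $g$ and hence lies outside $C$. (Indeed, for $\ell=1$ a nonzero proper constacyclic code closed under reversal is closed under both $\sigma_a$ and $\sigma_{a^{-1}}$, which forces $a=a^{-1}$.) What the paper's computation actually establishes is reversibility of the span of the $n-k$ displayed rows --- in this example the $2$-dimensional span of $x^4+1$ and $x^5+x$, which is indeed reversible --- and identifying that row space with $C$ is an unstated assumption, since $\dim C$ can exceed $n_i-\deg(g_if_i)$ whenever $\deg f_i>0$. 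In short: your proof has concrete gaps (the imported hypothesis $a_ia_{\ell+1-i}=1$ and the unproven interpolation step), but the first of these is precisely the condition the full-module statement genuinely needs, and the paper's matrix-level argument sidesteps rather than resolves it.
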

\begin{proof}
Let $g_if_i=a_{i,0}+a_{i,1}x +\cdots +a_{i,k_i}x^{k_i}$, so the generator matrix of this MT code with $\ell$ blocks is given by
\begin{align*}
    G=\begin{pmatrix}
    G_1&G_2&...&G_{l-1}&G_l
    \end{pmatrix},
\end{align*}
where $G_i$ is 
\begin{align*}
G_i=
\begin{pmatrix}
a_{i,0} &a_{i,1} &a_{i,2}&...&a_{i,k_i}&0&0&...\\
0 &a_{i,0} &a_{i,1}&...&a_{i,k-1}&a_{i,k_i}&0&...\\
\vdots &\ddots & \ddots & \ddots &\ddots&\ddots&\ddots &\ddots\\
0 & 0&... &a_{i,0}&a_{i,1}&a_{i,2}&...&a_{i,k_i}\\
\end{pmatrix}_{(n_i-k_i) \times n_i}.
\end{align*}
Hence, the reversed first row of $G$ is given by 
\begin{align*}
    (&0,...,0,a_{\ell,k_\ell},a_{\ell,k_\ell-1},...,a_{\ell,1},a_{\ell,0};\\&0,...,0,a_{\ell-1,k_{\ell-1}},a_{\ell-1,k_{\ell-1}-1},...,a_{\ell-1,1},a_{\ell-1,0};\\&\vdots\\&0,...,0,a_{1,k_1},a_{1,k_1-1},...,a_{1,1},a_{1,0}).
\end{align*}
Note that $g_if_i=(g_{\ell+1-i}f_{\ell+1-i})^*$ for $i=1,2,..,\ell$, so $a_{i,j}=a_{\ell+1-i,k-j+1}$ for $i=1,2,..,\ell$ and $j=1,2,...,k$. Also, since the degree of all $\deg(g_if_i)=\deg(g_{\ell+1-i}f_{\ell+1-i})$, it is not hard to see that the reversed first row is the $(n-k)^{th}$ row, and in general the reversed $j^{th}$ row is the same as $(n-k+1-j)^{th}$ row. Therefore, the reverse of each row in $G$ is in $G$, so $C$ is reversible.
\end{proof}

As mentioned previously all of these codes were constructed using the conditions for a $2-$QC code to be reversible found in \cite{qcReversible}. This paper gives conditions on the generator polynomials $g(x)$ and $f(x) \cdot g(x) \mod(x^m-1)$ with $\gcd(h(x), f(x)) = 1$ so the first  polynomial presented in this table is $g(x)$, and the second one is  $f(x) \cdot g(x) \mod(x^m-1)$.

\begin{small}
\begin{singlespacing}
\label{tab:3}       
\begin{longtable}{p{2cm} p{8cm}}
\caption*{Table 3: Reversible and Self-Orthogonal Binary QC Codes with best-known parameters}
\\
\hline\noalign{\smallskip}
$[n,k,d]$ & Polynomials  \\
\noalign{\smallskip}\hline\noalign{\smallskip}

$[60,28,12]$ & [00101100111001101100101001101], [111] \\
 $[60,29,12]$ & [000111000101011000100100011111], [11] \\
 $[72,29,16]$ & [011111101110101111101101100111101011], [10111101] \\
 $[72,30,16]$ & [0110111100000111111100010010111001], [1101011] \\
 $[80,37,16]$ & [1010011110001011110000000111110001110001], [1111] \\
$[80,38,16]$ & [0011111110101010100100101101110101011], [101] \\
 $[84,24,24]$ & [111100010111100111000111101110000000100101], [1010100000000010101] \\
 $[84,25,24]$ & [001111100001011100110011110100001101000111], [110011111111110011] \\
 $[84,26,24]$ & [001101010011000011101100000100001110001101], [11100000000000111] \\
 $[84,27,24]$ & [001001101000110101101001110110000011000001], [1011111111111101] \\
 $[90,28,24]$ & [011111010111100100001011100011000011110000001], [110111011110111011] \\
 $[90,34,20]$ & [00010111011111110110000000011010111111101101], [110001100011] \\
 $[96,36,20]$ & [10010100100000001100000001110111001001100111001], [1010001000101] \\
 $[96,37,20]$ & [01100110011010101011011110010011101110010010011], [100110011001] \\
 $[96, 38, 20]$ & [110011110011101110010010111100100111011010101], [11101110111] \\
 $[100,41,20]$ & [00110100011011100110001010100100101010000111001001], [1111111111] \\
 $[102,34,24]$ & [10011111000011100010010001100001100001101110101], [101110111111011101] \\
 $[102,42,20]$ & [1010110010011100000100111101101100110001000000101], [1001111001] \\
$[104,36,24]$ & [1010111000101110001001010010110111001001101110110101], [11110000000001111] \\
$[108,46,20]$ & [110000001010000001101111000101010010011011101000001], [111111111] \\
$[110,40,24]$ & [110111111111101000011111110110011000010001011111000111], [1111100000011111]
\end{longtable}
\end{singlespacing}
\end{small}

\section{Quantum Codes}
In comparison to classical information theory, the field of quantum information theory is relatively young and there is still a lot of room for further study. The idea of quantum error correcting codes was first introduced in \cite{Quantumoriginal1} and \cite{Quantumoriginal2}.  A method of constructing quantum error correcting codes (QECC) was given in \cite{Quantumoriginal3}. Since then researchers have investigated various methods of using classical error correcting codes to construct new QECCs.
The majority of the methods have been based on the CSS construction given  in \cite{Quantumoriginal3}. In this method, self-dual, self-orthogonal and dual-containing linear codes are used to construct quantum codes. The CSS construction requires two linear codes $C_1$ and $C_2$ such that $C_2^{\perp}\subseteq C_1$. Hence, if $C_1$ is a self-dual code, then we can construct a CSS quantum code using $C_1$ alone since $C_1^{\perp}\subseteq C_1$. If  $C_1$ is self-orthogonal, then we can construct a CSS quantum code with $C_1^{\perp}$ and $C_1$ since $C_1^{\perp}\supseteq C_1$. Similarly in the case $C_1$ is a dual-containing code. The last few tables in this section give some good QECCs obtained from both the CSS method and the method of additive codes. A number of these codes have the same parameters as those found in the literature but we obtained them using more direct and simpler constructions. Additionally, we have found other codes whose parameters  do not appear in the literature. 

The following codes are constructed using the method of quantum codes constructed from additive codes \cite{Quantumoriginal3}. After generating all divisors $g(x)$ of $x^m - a$ that generate non equivalent constacyclic codes  and finding a generator matrix of the form $[C_{g(x) \cdot f_{1}(x)}, C_{g(x) \cdot f_{2}(x)}]$ where $\gcd(f_{i}(x), h(x))) = 1$ and $C_g(x)$ represents the circulant matrix from $g(x)$, we construct an additive code from this generator matrix and check it for symplectic self orthogonality. If this check passes, then we  construct a quantum code from this additive code and compute its minimum distance. Then we can compare the minimum distance of these codes against the comparable  best known QECCs given in the database codetables.de. 

\vspace{2mm}

In Table 4, all of the codes  are optimal quantum QT codes constructed over $GF(2^2)$. Note that $\alpha$ is a root of the irreducible polynomial $x^2 + x + 1$ over $GF(2)$. 
\begin{small}
\begin{singlespacing}

\label{tab:4}       
\begin{longtable}{ p{1.8cm} p{9.8cm}}
\caption*{Table 4 : Optimal quantum codes over GF($2^2$) from QT Codes}
\\
\hline\noalign{\smallskip}
$[n,k,d]$ & Polynomials  \\
\noalign{\smallskip}\hline\noalign{\smallskip} 
$[[54,47,2]]$ & $[00\alpha^21110\alpha000\alpha^21110\alpha000\alpha^21110\alpha]$, \\ & $[\alpha^2\alpha\alpha^201\alpha^2\alpha0\alpha\alpha^2\alpha\alpha^201\alpha^2\alpha0\alpha\alpha^2\alpha\alpha^201\alpha^2\alpha0\alpha]$ \\
$[[56,50,2]]$ & $[10000\alpha^2\alpha\alpha1\alpha^21\alpha^2\alpha1\alpha^2\alpha\alpha\alpha\alpha100\alpha^21\alpha^210\alpha^2]$, \\ & $[\alpha^210011\alpha\alpha\alpha^21\alpha0\alpha^2\alpha^20\alpha\alpha^2\alpha^2\alpha\alpha110\alpha1\alpha^2]$ \\
$[[70,64,2]]$ & $[10\alpha^2\alpha010\alpha0\alpha\alpha1\alpha\alpha\alpha001\alpha\alpha\alpha^210\alpha0\alpha1\alpha^200\alpha^2110\alpha]$, \\ & $[\alpha^201\alpha^2\alpha\alpha\alpha^2\alpha\alpha^2101\alpha^2\alpha^201\alpha^2\alpha^20\alpha^2\alpha\alpha1011\alpha0\alpha^2\alpha^2\alpha^21\alpha0\alpha]$ \\
$[[88,80,2]]$ & $[100\alpha^21\alpha\alpha101\alpha^2\alpha\alpha\alpha^21\alpha01\alpha^2\alpha^2\alpha^20011\alpha0\alpha^2\alpha^2010\alpha\alpha^2\alpha^2\alpha0\alpha^210\alpha\alpha\alpha1]$, \\ & $[0\alpha^211\alpha00\alpha^2000100\alpha\alpha^2\alpha^2101\alpha^21\alpha1\alpha^2\alpha^20\alpha\alpha1\alpha\alpha\alpha\alpha^2\alpha\alpha011\alpha^2\alpha\alpha^21\alpha^2]$ \\
$[[98,91,2]]$ & $[11\alpha^2101011\alpha^2101011\alpha^2101011\alpha^2101011\alpha^2101011\alpha^2101011\alpha^2101]$, \\ & $[\alpha\alpha000\alpha^20\alpha\alpha000\alpha^20\alpha\alpha000\alpha^20\alpha\alpha000\alpha^20\alpha\alpha000\alpha^20\alpha\alpha000\alpha^20\alpha\alpha000\alpha^2]$ \\
\noalign{\smallskip}\hline
\end{longtable}
\end{singlespacing}
\end{small}


We also found some QECCs whose parameters  do not appear in the literature. Hence, we consider them new quantum codes. The inspiration for this idea came from \cite{preprint}. These codes are obtained using direct sums of two cyclic codes, with generators  $g_1(x), g_2(x)$ that are divisors of  $x^m - 1$. Each $g_{i}(x)$ generates  a dual containing cyclic code $C_i=\langle g_i(x)\rangle$  which implies that their direct sum $C_1\times C_2$ is also dual containing. So, the CSS method can be used with the direct sum code and its dual. The resultant QECC will have parameters $[[2 \cdot m, 2\cdot k - 2 \cdot m, d]]_q$. Here, $k$ is defined to be the dimension of the direct sum code which is equal to $k_1 + k_2$, the sum of the dimensions of the individual cyclic codes, and $d$ is the minimum of the minimum distances of the cyclic codes generated by $g_1(x)$ and $g_2(x)$. Table 5 contains these new codes.

\label{tab:5}       
\begin{longtable}{p{2cm} p{2cm} }
\caption*{Table 5 : New QECCs}
\\
\hline\noalign{\smallskip}
$[[n,k,d]]_q$ & Polynomials  \\
\noalign{\smallskip}\hline\noalign{\smallskip}
$[[60,54,2]]_3$ & $[21], [201] $ \\
$[[72,68,2]]_3$ & $[21], [21] $ \\
$[[84,80,2]]_3$ & $[21], [21] $ \\
$[[100,96,2]]_5$ & $[41], [41] $ \\

\noalign{\smallskip}\hline
\end{longtable}

Next, we have a number of codes that have the same parameters as the best known QECCs presented in the literature. Our codes have the advantage that they are obtained in a more direct and simple construction. They were all obtained using the same method as the codes in the previous table. Codes with the same parameters listed in Table 6 were found using more complex and indirect methods detailed in \cite{QuantumTables1}, \cite{QuantumTables2}, \cite{QuantumTables3} , \cite{QuantumTables4}, \cite{QuantumTables5}, \cite{QuantumTables6}. These constructions usually involve considering an extension ring $S$ of a ground ring or field $R$, constructing a code over the extension ring, then coming back to $R$ using some sort of Gray map. We argue that it is more desirable to construct codes with the same parameters using a more direct approach. 


\label{tab:6}       
\begin{longtable}{p{2cm} p{2cm} p{1cm}}
\caption*{Table 6 : QECC ties}
\\
\hline\noalign{\smallskip}
$[[n,k,d]]_q$ & Polynomials & Source  \\
\noalign{\smallskip}\hline\noalign{\smallskip}
$[[72,66,2]]_3$ & $[21], [111]$ & \cite{QuantumTables3}\\
$[[84,78,2]]_3$ & $[21], [201]$ & \cite{QuantumTables1} \\
$[[40,36,2]_5$ & $[41], [41] $ & \cite{QuantumTables5} \\
$[[40,34,2]]_5$ & $[41], [131] $ & \cite{QuantumTables1} \\
$[[40,28,3]]_5$ & $[3011], [3011] $ & \cite{QuantumTables5} \\
$[[60,54,2]]_5$ & $[41], [131] $ & \cite{QuantumTables3} \\
$[[90,84,2]]_5$ & $[41], [131] $ & \cite{QuantumTables4}\\
$[[100,94,2]]_5$ & $[41], [401] $ & \cite{QuantumTables1} \\
$[[140,134,2]]_5$ & $[41], [401] $ & \cite{QuantumTables6} \\

\noalign{\smallskip}\hline
\end{longtable}

\subsection{New Quantum Codes from Constacyclic Codes and a Result About Binomials}


In this section, we obtain new QECCs that are constructed by the method of CSS construction from constacyclic codes over finite fields. These codes are new in the sense that either there do not exist codes with these parameters in the literature (to the best of our knowledge) or the minimum distances of our codes are higher than the codes that are presented in the literature. Our work on this also led to a couple of theoretical results about the binomials of the form $x^n-a$ over $\Fq$. The first one (Theorem 5.3 below)  is similar to Theorem 4.1 in \cite{mt}. We start with the results on binomials.

\begin{lemma}
Let $\alpha,\beta\in \Fq$ and $\alpha\not=\beta$, where $q$ is a power of a prime $p$. Then for all $m\in\mathbb{Z}^{+}$, $\alpha^{p^m}\not=\beta^{p^m}$.
\end{lemma}
\begin{proof}
Since $\alpha\not=\beta$, it follows that 
\begin{align*}
    (\alpha-\beta)^{p^m}&\not=0\\
    \alpha^{p^m}-\beta^{p^m}&\not=0\\
    \alpha^{p^m}&\not=\beta^{p^m}
\end{align*}
\end{proof}
\begin{theorem}
Let $q$ be a power of a prime $p$ and  let $\alpha\in \Fq$. Then for any $m\in \mathbb{Z}^{+}$, there exists a unique $\beta\in \Fq$ such that $\alpha=\beta^{p^m}$.
\end{theorem}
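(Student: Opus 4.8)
The plan is to show that the map $\phi\colon \Fq \to \Fq$ defined by $\phi(\beta) = \beta^{p^m}$ is a bijection; existence and uniqueness of a $\beta$ with $\alpha = \beta^{p^m}$ are then exactly the surjectivity and injectivity of $\phi$. So the whole theorem reduces to proving that this ``$p^m$-th power'' map is one-to-one and onto.

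First I would dispose of uniqueness, which is immediate from the preceding Lemma. Indeed, that Lemma is precisely the contrapositive of injectivity: it asserts that distinct $\beta_1 \neq \beta_2$ yield $\beta_1^{p^m} \neq \beta_2^{p^m}$. Hence if two elements both served as a $p^m$-th root of $\alpha$, they would have equal images under $\phi$ and so must coincide. Thus at most one $\beta \in \Fq$ satisfies $\alpha = \beta^{p^m}$.

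For existence I would invoke finiteness: $\phi$ is an injective self-map of the finite set $\Fq$, and any injection from a finite set to itself is automatically a surjection (an injection between finite sets of equal cardinality is a bijection). Therefore the given $\alpha$ lies in the image of $\phi$, so some $\beta \in \Fq$ satisfies $\phi(\beta) = \alpha$. Together with the previous step, this $\beta$ is the unique desired element.

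There is no genuine obstacle here; the only point deserving a moment's care is the passage from injectivity to surjectivity, which rests essentially on $\Fq$ being finite. If one wanted a more structural argument in place of the pigeonhole step, an alternative for existence is to write $q = p^r$, observe that $\gcd(p^m, q-1) = 1$, and conclude that raising to the $p^m$-th power permutes the cyclic multiplicative group $\Fq^{*}$ while fixing $0$. I would nonetheless present the Lemma-plus-finiteness route as the main proof, since it directly reuses the result just established.
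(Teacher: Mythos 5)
Your proof is correct and follows essentially the same route as the paper: uniqueness comes from the injectivity of $\beta \mapsto \beta^{p^m}$ established in the preceding lemma, and existence follows because an injective self-map of the finite set $\Fq$ is necessarily surjective. Your phrasing in terms of an explicit map $\phi$ and the pigeonhole principle is simply a cleaner articulation of the paper's argument that $\{a_1^{p^m},\dots,a_q^{p^m}\}$ must exhaust $\Fq$.
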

\begin{proof} Let $\Fq=\{a_1,a_2,...,a_q\}$. Given previous lemma, for any $\alpha,\beta\in \Fq$, where $\alpha\not=\beta$, we have $\alpha^{p^m}\not=\beta^{p^m}$. Hence, $\Fq=\{a_1,a_2,...,a_q\}=\{a_1^{p^m},a_2^{p^m},...,a_q^{p^m}\}$. Since $\alpha\in \Fq$, it follows that $\alpha\in\{a_1^{p^m},a_2^{p^m},...,a_q^{p^m}\}$. Therefore, there must be a unique $\beta\in \Fq$ such that $\alpha=\beta^{p^m}$.
\end{proof}


\begin{theorem} Let $n\in \mathbb{Z}^{+}$, $q$ be a prime power and $a,b\in \Fq$ such that $a\not=b$. Then $\gcd(x^n-a,x^n-b)=1$ .
\end{theorem}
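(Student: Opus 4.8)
The plan is to exploit the fact that subtracting the two binomials annihilates the $x^n$ term, leaving a nonzero constant, and then to invoke the standard Euclidean-algorithm property that the gcd is unchanged under such a subtraction. First I would form the difference $(x^n - a) - (x^n - b) = b - a$. Since $a \neq b$ by hypothesis, this difference is a nonzero element of $\Fq$, hence a unit in the polynomial ring $\Fq[x]$.

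Next I would argue that any common divisor $d(x) \in \Fq[x]$ of $x^n - a$ and $x^n - b$ must also divide their difference $b - a$. Because $b - a$ is a nonzero constant, the only polynomials dividing it are the nonzero constants, so $d(x)$ has degree $0$; consequently the monic gcd is $1$. Phrased through the Euclidean algorithm this is simply the chain $\gcd(x^n - a, x^n - b) = \gcd\bigl(x^n - a, (x^n-a)-(x^n-b)\bigr) = \gcd(x^n - a, b - a) = 1$.

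I expect no serious obstacle here: the entire argument rests on the single observation that the leading $x^n$ terms cancel. The only point needing a moment's care is to record that a nonzero constant is a unit in $\Fq[x]$, so that sharing it as a ``common factor'' forces the gcd down to $1$; this is precisely where the hypothesis $a \neq b$ is used and cannot be dropped, since if $a = b$ the two binomials coincide and their gcd is the whole polynomial $x^n - a$. An alternative route, if one preferred to reason with roots, would be to observe that a common root $\gamma$ in an algebraic closure of $\Fq$ would satisfy $\gamma^n = a$ and $\gamma^n = b$ simultaneously, forcing $a = b$ and a contradiction; but the subtraction argument is cleaner and avoids passing to an extension field.
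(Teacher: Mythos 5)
Your proof is correct, but it takes a genuinely different route from the paper's. The paper argues through roots: it takes $\delta$ to be a root of $x^n-a$ in a suitable extension field, observes that $\delta^n=a\not=b$, concludes that the two binomials share no common root, and hence that their gcd is $1$ --- precisely the ``alternative route'' you sketch and set aside in your last paragraph. Your primary argument, by contrast, is the Euclidean step: $(x^n-a)-(x^n-b)=b-a$ is a nonzero constant, hence a unit in $\Fq[x]$, so any common divisor has degree $0$ and the monic gcd is $1$. Your version buys several things: it stays entirely inside $\Fq[x]$, needs no algebraic closure, avoids the (true but unstated) fact that the gcd of two polynomials over $\Fq$ is unchanged upon passing to an extension field, and it generalizes verbatim to any commutative ring in which $b-a$ is a unit. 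The paper's root-based version fits the style of the neighboring results (Lemma 5.1 and Theorems 5.2, 5.4 all reason by evaluating at elements or roots), but as written it contains a small slip your approach sidesteps: the displayed computation $(\delta^k)^{n}-b=(\delta^{n})^k-b=a-b$ is only valid for $k=1$, since in general $(\delta^{n})^k=a^k$; the intended conclusion follows already from the $k=1$ case, and the extraneous $\delta^k$ adds nothing. Your use of the hypothesis $a\not=b$ is correctly localized (it is exactly what makes $b-a$ a unit), and your remark that the hypothesis cannot be dropped is accurate.
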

\begin{proof}
Let $\delta$ be a root of $x^m-a$. Then for any $k\in \mathbb{Z}^{+}$, it follows that 
\begin{align*}
    (\delta^k)^{m}-b=(\delta^m)^k-b=a-b\not=0
\end{align*}
Hence, the polynomials $x^m-a$ and $x^m-b$ share no common roots. Therefore, $\gcd(x^n-a,x^n-b)=1$.
\end{proof}

\begin{theorem}
 Let $q$ be a power of prime $p$, and let $g(x)=x^n-a\in \Fq[x]$. Then all irreducible factors of $x^n-a$ have the same multiplicity $p^z||n$, where $z$ is the largest positive integer such that $p^z$ divides $n$.
\end{theorem}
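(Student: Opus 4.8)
The plan is to reduce $x^n-a$ to a perfect $p^z$-th power of a separable binomial and then read off the multiplicities directly. First I would write $n = p^z m$, where $z$ is the exponent in the statement, so that $p \nmid m$. The purpose of this split is to isolate the inseparable part of the exponent (the $p$-power) from the part coprime to $p$, since only the former will contribute repeated factors.

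Next I would invoke the preceding result, Theorem 5.2: applied to the element $a \in \Fq$ with $p$-power exponent $p^z$, it furnishes a unique $b \in \Fq$ with $a = b^{p^z}$. The key is that this root lives in $\Fq$ itself, so the ensuing factorization stays over the base field. Using the Frobenius identity in characteristic $p$ (the ``freshman's dream''), I would then compute
$$x^n - a = x^{p^z m} - b^{p^z} = (x^m)^{p^z} - b^{p^z} = (x^m - b)^{p^z}.$$
This single identity is the heart of the argument: it exhibits $x^n - a$ as a perfect $p^z$-th power of the binomial $x^m - b$.

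It then remains to show that $x^m - b$ is separable, i.e.\ that its irreducible factors are pairwise distinct and each occurs to the first power. Assuming $a \neq 0$ (which holds for the constacyclic setting under consideration, where $a \in \Fq^{*}$), we have $b \neq 0$. The formal derivative of $x^m - b$ is $m x^{m-1}$, which is nonzero because $p \nmid m$ forces $m \neq 0$ in $\Fq$; its only root is $x = 0$, which is not a root of $x^m - b$ since $b \neq 0$. Hence $\gcd(x^m - b, (x^m - b)') = 1$, so $x^m - b$ has no repeated irreducible factors. Writing $x^m - b = \prod_i q_i(x)$ with the $q_i$ distinct and irreducible yields $x^n - a = \prod_i q_i(x)^{p^z}$, and by unique factorization in $\Fq[x]$ every irreducible factor of $x^n - a$ occurs with multiplicity exactly $p^z$.

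The main obstacle I anticipate is not any heavy computation but keeping the two subtle hypotheses in place. First, one must genuinely use Theorem 5.2 to guarantee that the $p^z$-th root $b$ exists \emph{inside} $\Fq$, rather than only in an extension; this is precisely what allows the factorization to remain over $\Fq$. Second, the separability step quietly requires $a \neq 0$: in the degenerate case $a = 0$ one gets $x^n$, whose only irreducible factor $x$ has multiplicity $n \neq p^z$ in general, so the statement should be read with the standing assumption $a \in \Fq^{*}$.
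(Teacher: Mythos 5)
Your proof is correct and follows essentially the same route as the paper's: write $n = p^z m$ with $p \nmid m$, use the preceding theorem to extract a $p^z$-th root of $a$ inside $\Fq$, apply the Frobenius identity to get $x^n - a = (x^m - b)^{p^z}$, and then use the derivative-gcd criterion to show $x^m - b$ is separable. Your explicit remark that the separability step requires $a \neq 0$ is a point the paper handles only implicitly (via the phrase that $x^b - a'$ is not a multiple of $x$), and it is a worthwhile clarification, since the standing hypothesis $a \in \Fq^{*}$ from the constacyclic setting is indeed needed.
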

\begin{proof}
Let $n=p^z\cdot b$, where $\gcd(b,p)=1$. Given Theorem 5.2, $x^n-a$ can be written as 
\begin{align*}
    x^n-a&=x^{p^z\cdot b}-a\\
    &=(x^{b}-a')^{p^z},
\end{align*}
where $a'\in \Fq$. Let's call $f(x)=x^{b}-a'$, and then consider  $\gcd(f(x),f'(x))$. Note that $b\not =0$ and $x^{b}-a'$ is not a multiple of $x$, so it follows that $\gcd(f(x),f'(x))=\gcd(x^{b}-a',bx^{b-1})=1$. Hence, all irreducible factors of $f(x)=x^b-a'$ have the same multiplicity 1. Therefore, all irreducible factors of $g(x)=x^n-a=f^{p^z}(x)$ have the same multiplicity $p^z$.
\end{proof}
By the definition of CSS construction, we need two codes such that one is contained in the dual of the other one. By ideals inclusion we know that
\begin{align*}
    \langle g(x)f(x) \rangle \subseteq  \langle g(x) \rangle 
\end{align*}
and we consider $C_2^{\perp}=\langle g(x)f(x)\rangle \subseteq \langle g(x) \rangle =C_1 $. Here  $g(x)$ is a divisor of a binomial $x^n-a$ and it generates a constacyclic code. 

Since for a given $n\in\mathbb{Z}^{+}$ and distinct elements $\alpha,\beta\in \Fq$, $\gcd(x^n-\alpha,x^n-\beta)=1$, it follows that for any divisor $g(x)|x^n-\alpha$,  we have $g(x)f(x)\not | x^n-\beta$ for any $f(x)\in \Fq[x]$. In other words, for a given $C_1=\langle g \rangle$, where $g|x^n-\alpha$,  all possible constacyclic codes of the same length $C_2^{\perp}$ such that $C_2^{\perp}\subseteq C_1$ have the form $C_2^{\perp}=\langle gf\rangle$, where $gf|x^n-\alpha$. Hence, for given $\alpha\in \Fq$ and $n$, we only need to consider $g_i|x^n-\alpha$ and $g_if_i|x^n-\alpha$. By Theorem 5.3, any generator of $C_2^{\perp}$ will be a divisor of $x^n-\alpha$, and not $x^n-\beta$ for $\beta\not = \alpha$.
Also, based on the multiplicity of all irreducible factors of $x^n-\alpha$, for any $g(x)$, we can determine all $g(x)f(x)$ by enumerating all  possible combinations of multiplicities of irreducible factors. Therefore, our algorithm goes through every possible pair of $g(x)$ and $g(x)f(x)$, and we view each $\langle g(x)\rangle$ as $C_1$ and each $\langle g(x)f(x) \rangle$ as $C_2^{\perp}$ in the CSS construction. This way, we obtained many new QECCs. The new codes we obtained are of two types: i) those that have higher minimum distances than the codes given  in the literature, ii) those whose parameters do not appear in the literature. We present them in the following two tables.

\begin{small}
\begin{singlespacing}

\label{tab:7}       
\begin{longtable}{p{2.7cm}p{2.7cm} p{7.5cm} }
\caption*{Table 7 : New QECCs that have higher minimum distances than the codes given in the literature}
\\
\hline\noalign{\smallskip}
Known QECC &New $[[n,k,d]]_q$ & $g(x)$ \& $g(x)f(x)$  \\
\noalign{\smallskip}\hline\noalign{\smallskip}
$[[11,1,4]]_5$ \cite{QuantumTables11}&$[[11,1,5]]_5$ & $[124114], [1122031] $ \\
$[[30,20,3]]_5$ \cite{QuantumTables10}& $[[30,20,4]]_5$& $[142241], [14401243413033031434210441] $ \\
$[[40, 32, 2]]_5$ \cite{QuantumTables2}&$[[40, 32, 3]]_5$ & $[12342], [1203022234202444014241112314133340313] $ \\
$[[93, 75,2]]_5$ \cite{QuantumTables12}&$[[93, 75, 4]]_5$ &$[1032243334]$,$[121431401222402220013421431143342442$\\ &&$3124230120124143123112101233022341423121100111211] $ \\
$[[60, 52,2]]_5$ \cite{QuantumTables3}&$[[60, 52, 3]]_5$ &$[10434]$,$[101321202243220133133431344440014124413202$ \\ 
&&$440312032014321] $ \\
$[[78,60,3]]_3$ \cite{QuantumTables12}&$[[78,60,4]]_5$ &$[1120201112]$,$[120220102221110112120120121022021112$ \\ 
&&$0012212110210210000100012101020022] $ \\
$[[18,2,3]]_7$ \cite{QuantumTables9}&$[[18,2,6]]_7$ & $[143561634], [14631236125] $ \\
$[[12, 4, 3]]_{13}$ \cite{QuantumTables7}&$[[12, 4, 5]]_{13}$ & $[18554], [184835595] $ \\
$[[24,18,2]]_{17}$ \cite{QuantumTables7}&$[[24,18,3]]_{17}$ & $[1332], [9511521119410532108411631] $ \\
$[[21, 7, 2]]_{19}$ \cite{QuantumTables8}&$[[21, 7, 3]]_{19}$ & $[19999998], [11313131313131319999991] $ \\
\noalign{\smallskip}\hline
\end{longtable}
\end{singlespacing}
\end{small}
The final table shows the set of codes we have found whose parameters do not appear in the literature.

\begin{small}
\begin{singlespacing}
\label{tab:8}       
\begin{longtable}{p{2cm} p{10.8cm} }
\caption*{Table 8 : New QECCs}
\\
\hline\noalign{\smallskip}
$[[n,k,d]]_q$ & Polynomials  \\
\noalign{\smallskip}\hline\noalign{\smallskip}
$[[61, 41, 5]]_3$ & $[10011111001], [1100101022020121010222110210221110202120101102020022] $ \\
$[[52, 7, 6]]_3$ & $[12211120111], [100000202100102111] $ \\
$[[52, 35, 4]]_3$ & $[110000012], [11020221221200111012111020100112002220020221] $ \\
$[[52, 15, 5]]_3$ & $[12010202], [12200202101110221121201] $ \\
$[[40, 11, 6]]_3$ & $[112100201012], [11112000221100111020001] $ \\
$[[40, 22, 5]]_3$ & $[1222100011], [12220111102211221220021121211111] $ \\
$[[30, 2, 8]]_3$ & $[101101000202101], [10211201020112201] $ \\
$[[32, 2, 7]]_3$ & $[1211021211201002], [122010112001001202] $ \\
$[[30, 10, 4]]_3$ & $[120212022101], [1211111102001211202201] $ \\
$[[30, 16, 4]]_3$ & $[11122111], [110112012020020210211011] $ \\
$[[33, 1, 10]]_3$ & $[12000021012021021], [111000222111222222] $ \\
$[[35, 4, 6]]_3$ & $[12122221121022], [101020002212012112] $ \\
$[[36, 1, 8]]_3$ & $[11120020102002111], [100110211221022002] $ \\
$[[36, 2, 8]]_3$ & $[12212100100121221], [1200122110112210021] $ \\
$[[36, 18, 4]]_3$ & $[1110220111], [1102010202122112212020102011] $ \\
$[[20 , 4, 5]]_3$ & $[112212211], [1122201022211] $ \\
$[[23, 1, 8]]_3$ & $[100202011222], [1202121101001] $ \\
$[[26, 9, 6]]_3$ & $[111122121], [112200100222020102] $ \\
$[[28, 0, 9]]_3$ & $[122221000112122], [122221000112122] $ \\
$[[28, 12, 6]]_3$ & $[111000121], [121100111020121002111] $ \\
$[[14, 2,5]]_3$ & $[1101011], [111202111] $ \\
$[[11, 1, 5]]_5$ & $[124114], [1122031] $ \\
$[[12, 4, 4]]_5$ & $[14102], [140124402] $ \\
$[[16, 1, 6]]_5$ & $[1320214], [11112222] $ \\
$[[20, 4, 6]]_5$ & $[121201212], [1231323013112] $ \\
$[[30, 20, 4]]_5$ & $[142241], [14401243413033031434210441] $ \\
$[[10,2,4]]_7$ & $[12134], [1230661] $ \\
$[[14, 6, 4]]_7$ & $[12056], [12462526421] $ \\
$[[40, 28, 4]]_7$ & $[1455541], [11152236066630031463035562341504546] $ \\
$[[12, 2, 5]]_{13}$ & $[154816], [1582412411] $ \\
\noalign{\smallskip}\hline
\end{longtable}
\end{singlespacing}
\end{small}

\end{document}